\documentclass[11pt,a4paper]{article}
\usepackage{amsmath,amssymb,amsthm}
\newcommand{\href}[2]{#2}
\newcommand{\nolinkurl}[1]{\texttt{#1}}
\ifdefined
\fi
\ifdefined\pdfinfoomitdate\pdfinfoomitdate=1\fi
\ifdefined\pdftrailerid\pdftrailerid{}\fi
\ifdefined\pdfsuppressptexinfo\pdfsuppressptexinfo=-1\fi
\newtheorem{theorem}{Theorem}
\newtheorem{lemma}[theorem]{Lemma}

\newtheorem{corollary}[theorem]{Corollary}
\theoremstyle{remark}

\newtheorem*{unnumberedremark}{Remark}
\newcommand{\Ed}{E_F^{\,\mathrm d}}
\newcommand{\Ec}{E_F^{\,\mathrm c}}
\newcommand{\J}{\mathcal J}
\newcommand{\E}{\mathcal E}
\newcommand{\U}{\mathcal U}
\newcommand\HH{\mathcal H}
\newcommand\St{\mathfrak S}
\newcommand{\ext}{\mathrm{ext}}
\newcommand{\tr}{\operatorname{Tr}}

\newcommand{\ket}[1]{\lvert #1\rangle}
\newcommand{\bra}[1]{\langle #1\rvert}
\newcommand{\proj}[1]{\ket{#1}\bra{#1}}
\newcommand{\HS}{\mathcal S_2}

\title{On the coincidence of discrete and continuous definitions of Entanglement of Formation}
\author{%
Haobo Ma\textsuperscript{2,4}\quad
Maksim E. Shirokov\textsuperscript{1}\quad
Wenlin Zhang\textsuperscript{3,4}\\[0.6em]
\small\textsuperscript{1}Steklov Mathematical Institute, Moscow, Russia\\
\small\textsuperscript{2}ChronoAI Pte Ltd, Singapore\\
\small\textsuperscript{3}National University of Singapore\\
\small\textsuperscript{4}The Omega Institute\\[0.4em]
\small\href{mailto:auric@aelf.io}{\nolinkurl{auric@aelf.io}}\quad
\href{mailto:msh@mi-ras.ru}{\nolinkurl{msh@mi-ras.ru}}\quad
\href{mailto:e1327962@u.nus.edu}{\nolinkurl{e1327962@u.nus.edu}}}
\date{15 October 2026}

\begin{document}
\maketitle
\begin{abstract}
It is proved that every Borel ensemble of  pure states on a separable bipartite Hilbert
space with finite average entanglement admits a countable
ensemble with the same barycenter and arbitrarily small additional
average entanglement. Consequently, the discrete and continuous versions
of the Entanglement of Formation coincide on the whole space of states of infinite-dimensional bipartite quantum system (with infinite
values allowed).

The main idea of the proof (realized with the AI-assistance) is the lifting a Borel ensemble of pure bipartite states to a Borel probability measure on the unit sphere in the space of Hilbert--Schmidt operators followed by dividing the support of this measure into appropriate cells and replacing  each cell by its conditional barycenter. The resulting countable ensemble is obtained by a recursive construction.
\end{abstract} 

\section{Introduction}

The Entanglement of Formation (EoF) is one of the fundamental
quantitative measures of bipartite quantum entanglement. In
finite-dimensional quantum systems, it is defined as the convex-roof
extension of the entropy of entanglement of pure states \cite{Bennett}.  More
precisely, the EoF of a state $\rho$ of a bipartite system $AB$  is the infimum of
the average entanglement entropy over all pure-state decompositions
of $\rho$. This definition has become a standard tool in quantum
information theory, owing to its operational significance and its
important structural properties \cite{4H,P&V,Uhl}.

The situation becomes more subtle when the underlying Hilbert spaces
are infinite-dimensional. In this setting, the von Neumann entropy is
in general unbounded and may take the value $+\infty$, while the set
of quantum states is noncompact. Consequently, some of the familiar
relations between convexity, continuity, and convex-roof constructions
require additional care. In particular, there are two natural ways
to extend the finite-dimensional definition of the EoF to
infinite-dimensional systems: one may optimize over countable
pure-state decompositions, or over arbitrary Borel probability
measures on the set of pure states.

Let $\mathcal{H}_A$ and $\mathcal{H}_B$ be separable Hilbert spaces,
and let $\rho$ be a state on
$\mathcal{H}_{AB}\doteq\mathcal{H}_A\otimes\mathcal{H}_B$. The discrete version of the EoF
is defined by
\begin{equation}
E_F^{\,\rm d}(\rho)
=
\inf_{\sum_{i}
q_i |\psi_i\rangle\langle\psi_i|=\rho}
\sum_{i=1}^{+\infty} q_i
S\!\left(\operatorname{Tr}_B
|\psi_i\rangle\langle\psi_i|\right),
\label{eq:discrete-eof}
\end{equation}
where the infimum is taken over all countable pure-state
decompositions of $\rho$, and $S(\cdot)$ denotes the
von Neumann entropy. The continuous version is given by
\begin{equation}
E_F^{\,\rm c}(\rho)
=
\inf_{\mu:\,\int |\psi\rangle\langle\psi|\,\mu(d\psi)=\rho}
\int
S\!\left(\operatorname{Tr}_B
|\psi\rangle\langle\psi|\right)\,\mu(d\psi),
\label{eq:continuous-eof}
\end{equation}
where the infimum is taken over all Borel probability measures
$\mu$ on the set $\ext\St(\HH_{AB})$ of pure states with barycenter $\rho$  (here and in what follows we write $d\psi$ instead of $d|\psi\rangle\langle\psi|$ for brevity).
Every countable ensemble determines an atomic probability measure.
It follows immediately that
\begin{equation}
E_F^{\,\rm c}(\rho)\leq E_F^{\,\rm d}(\rho)
\label{eq:eof-order}
\end{equation}
for every bipartite state $\rho$. The converse inequality, however, does
not follow directly from the definitions.

The distinction between these two constructions is not merely
formal. The continuous convex-roof construction yields an
entanglement measure with strong analytical properties, including
lower semicontinuity and monotonicity under generalized
local operations and classical communication (LOCC) \cite{shirokov2010,cost2026}. By contrast,
in infinite dimensions a discrete convex-roof construction need
not inherit all the expected properties of the corresponding
continuous one.\footnote{One can say more: the validity of these properties for the discrete convex-roof construction implies  its coincidence with the continuous one.}\footnote{One of the main problems with the EoF  defined in (\ref{eq:discrete-eof}) was (up to the appearance of this article) the lack of the proof that it vanishes on the set of  countably-non-decomposable separable states  with infinite  marginal entropies \cite{hsw2005,shirokov2010}.}  At the same time, an obvious drawback of the definition (\ref{eq:continuous-eof}) is its technical complexity: the EoF is a hardly computable function even in the
finite-dimensional case, but the necessity to take  the infimum in (\ref{eq:continuous-eof}) over all measures on $\ext\St(\HH_{AB})$ with fixed
barycenter radically complicates the definition. Moreover, from the physical point of view, we are faced with the following problem: if $\rho$ is a state with finite
energy (w.r.t. some Hamiltonian), i.e. a  physically realizable state, the supports of some probability measures on $\ext\St(\HH_{AB})$ with the barycenter $\rho$ contain states with infinite
energy which can not be realized in physical experiments.\footnote{For a Hamiltonian bounded below, if the average state of a discrete ensemble has finite energy then every state with positive weight has finite energy. For a general Borel probability measure with a barycenter of finite energy, the states have finite energy almost everywhere.} This issue was studied in \cite[Section~5.1]{shirokov2010}, where
it is proved, in particular, that the two definitions of the EoF coincide
at any state having at least one marginal (reduced)
state with finite von Neumann entropy, but the case of an arbitrary
bipartite state remained unresolved.

The aim  of the present paper is to settle this question by
proving that the two definitions coincide for every bipartite state
on a tensor product of separable Hilbert spaces:
\begin{equation}
E_F^{\,\rm d}(\rho)=E_F^{\,\rm c}(\rho)
\quad\text{for every bipartite state }\rho.
\label{eq:main-result}
\end{equation}
In particular, the equality holds without requiring either
marginal state to have finite entropy. Thus the discrete
definition, based exclusively on countable pure-state ensembles,
and the continuous definition, based on arbitrary probability
measures, determine the same value even in the most general
infinite-dimensional setting considered here.

This result has several consequences. First, it removes a
potential ambiguity in the definition of the EoF for
infinite-dimensional systems: the numerical value of the measure
does not depend on whether one allows only countable ensembles
or arbitrary continuous ensembles. Second, since the continuous
EoF is the lower-semicontinuous convex-roof extension of the
pure-state entropy of entanglement, the equality
\eqref{eq:main-result} establishes the lower semicontinuity of
the discrete EoF as well. Third, it allows the analytical
properties proved for the continuous construction to be applied
directly to the discrete definition, including the corresponding
generalized LOCC-monotonicity properties. These conclusions are
particularly useful when working with limits of states,
infinite-dimensional decompositions, and approximation
procedures in quantum information theory.

It is important to stress that equality of the infima in (\ref{eq:discrete-eof}) and (\ref{eq:continuous-eof}) does
not imply, by itself, that the attainability of the infimum  in (\ref{eq:continuous-eof}) proved in \cite{shirokov2010}
guarantees the attainability of the infimum  in (\ref{eq:discrete-eof}). Rather, the result
states that allowing arbitrary probability measures cannot lower
the infimum below the value obtainable as an infimum over
countable ensembles.

\section{The main result}\label{sec:main}

In this section, we prove that  definitions (\ref{eq:discrete-eof}) and (\ref{eq:continuous-eof}) coincide. The main result implying this coincidence is an exact discretization
theorem: a continuous pure-state ensemble of finite average
entanglement can be replaced by a countable pure-state ensemble with
the same barycenter and arbitrarily small increase in entanglement cost.\footnote{In the article,  the term entanglement cost (or, briefly, cost) of an ensemble (measure) means the average entanglement of this ensemble (measure). This should not be confused with the entanglement cost of a bipartite state in terms of \cite{cost2026}.}

Let $\mathcal{H}_A,\mathcal{H}_B$ be separable complex Hilbert spaces, and let
$\mathcal{H}_{AB}=\mathcal{H}_A\otimes \mathcal{H}_B$. All logarithms are natural. For a unit vector
$\psi\in \mathcal{H}_{AB}$ put
\[
 e(\psi)=S(\tr_B\proj{\psi})\in[0,\infty],
 \qquad S(\sigma)=-\tr\sigma\log\sigma.
\]
The function $e$ depends only on the pure state. Using this function we can rewrite definitions (\ref{eq:discrete-eof}) and (\ref{eq:continuous-eof}) as
\begin{align}
 \Ed(\rho)&=\inf\left\{\sum_jq_je(\psi_j):
       q_j>0,\ \sum_jq_j=1,\ \sum_jq_j\proj{\psi_j}=\rho\right\},
       \label{eq:discrete}\\
 \Ec(\rho)&=\inf_{\mu:\,\int\pi\,\mu(d\pi)=\rho}
       \int e(\pi)\,\mu(d\pi).
       \label{eq:continuous}
\end{align}
In \eqref{eq:discrete} the ensemble is finite or countable and its
operator sum converges in trace norm. In \eqref{eq:continuous} $\mu$
is a Borel probability measure on pure states, with the trace norm
topology, and the barycenter is defined via a Bochner integral.\footnote{It is reasonable to consider such measures as generalized ensembles of quantum states \cite{holevo2005,H-SCI}.}

The approach developed in \cite{shirokov2004,shirokov2010}  establishes the equality $\Ed(\rho)=\Ec(\rho)$ when at least
one marginal entropy of the state $\rho$  is finite and proves lower semicontinuity of the
function $\Ec$.

Lemma 1 in \cite{holevo2005} says that, on the full state space, measures
with finite support are dense among all probability measures having a
given barycenter, in the topology of weak convergence of measures.
Its proof replaces each cell by its conditional state barycenter,
which can be mixed. The theorem below gives pure output atoms and
an upper bound on average entanglement at the exact barycenter.

\begin{theorem}\label{thm:discretization}
Let $\,\mu$ be a Borel probability measure on the set $\,\ext\St(\HH_{AB})$ of pure states on $\mathcal{H}_{AB}$ with
barycenter $\rho$ and finite average entanglement
$c_\mu=\int e(\pi)\,\mu(d\pi)$. For every $0<\eta<1$ there is a countable pure-state ensemble with the average state $\rho$ with average entanglement
at most
\begin{equation}\label{eq:finalbound}
 c_\mu+\frac{5\eta}{4-\eta}.
\end{equation}
\end{theorem}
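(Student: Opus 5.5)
The plan is to work with vectors rather than density operators, so that the replacement atoms stay pure. Identify $\HH_{AB}$ with the Hilbert--Schmidt class $\HS(\HH_B,\HH_A)$ via $\psi\mapsto X_\psi$. Then $\proj{\psi}$ corresponds to $X_\psi$, $\tr_B\proj{\psi}=X_\psi X_\psi^*$, and $e(\psi)=S(X_\psi X_\psi^*)$. Using a Borel phase selection $\pi\mapsto\psi_\pi$ on $\ext\St(\HH_{AB})$ (for instance, fix the phase of the first nonzero coordinate in a fixed basis), I push $\mu$ forward to a Borel probability measure $\nu$ on the unit sphere of $\HS$. Then $\int\proj{\psi}\,\nu(d\psi)=\rho$ and $\int e\,d\nu=c_\mu$.

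The basic step is to partition a set of almost full $\nu$-measure into countably many Borel cells $C_k$. Each cell should have small norm diameter $\delta$, and on each cell $e$ should oscillate by less than a prescribed amount. This is possible because $\HS$ is separable and $e$ is Borel. On each cell I put $v_k=\nu(C_k)^{-1}\int_{C_k}\psi\,\nu(d\psi)$, so that $\|v_k\|^2\ge 1-\delta^2$. The exact identity
\[
\int_{C_k}\proj{\psi}\,\nu(d\psi)=\nu(C_k)\proj{v_k}+\int_{C_k}\proj{\psi-v_k}\,\nu(d\psi)
\]
expresses the restriction of $\rho$ to $C_k$ in two parts. The first is a pure atom $v_k/\|v_k\|$ with weight $\nu(C_k)\|v_k\|^2$. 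The second is a remainder, which is again a barycenter of pure states: namely the normalized vectors $(\psi-v_k)/\|\psi-v_k\|$ with weights $\|\psi-v_k\|^2\nu(d\psi)$. Collecting the remainders over all cells, together with the discarded small-measure set, gives a new Borel measure $\nu_1$ on pure states. The barycenter of $\nu_1$ is exactly $\rho$ minus the discrete part, and its total mass is at most a fixed fraction of the current mass (say $\eta/4$, by choice of $\delta$ and the discarded set). Iterating on $\nu_1,\nu_2,\dots$ produces a countable ensemble. The masses decay geometrically and the discrete parts sum in trace norm exactly to $\rho$. The final bound $c_\mu+\tfrac{5\eta}{4-\eta}$ should come out as a geometric series: the cost increment at each stage is proportional to the current mass, and the masses shrink by the factor $\eta/4$.

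The cost bookkeeping has two ingredients that I must control at every stage.
\begin{enumerate}
\item[(i)] $e(v_k/\|v_k\|)$ should not exceed the $\nu$-average of $e$ on $C_k$ by more than a small amount. For this I intend to use that $X_{v_k}$ is an average of the $X_\psi$, which gives the operator Jensen bound $X_{v_k}X_{v_k}^*\le\nu(C_k)^{-1}\int_{C_k}X_\psi X_\psi^*\,d\nu$. I would combine it with concavity of $S$ and an almost-convexity (Holevo-type) correction term that is small because the cell has small diameter.
\item[(ii)] The remainder measure $\nu_{j+1}$ must have finite cost, and its total cost (mass times average entanglement) must be at most a small constant times the mass of $\nu_j$. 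This is what makes the series converge to an extra term of order $\eta$.
\end{enumerate}

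I expect (ii), and in infinite dimensions also (i), to be the main obstacle, because $e$ is only lower semicontinuous and unbounded: a small vector perturbation $\psi-v_k$ can have huge reduced entropy. My first attempt would be to exploit the freedom in choosing cells. Refining the partition (using tightness of $\nu$ and a Lusin-type compact set on which $e$ is continuous) makes $\int\|\psi-v_k\|^2 e\bigl((\psi-v_k)/\|\psi-v_k\|\bigr)\,d\nu$ small. To estimate it, I would bound $e$ of a normalized difference through the entropy of the mixture $\tfrac12(\proj{\psi}+\proj{v_k})$ plus $\log 2$, and then use finiteness of $c_\mu$ with dominated convergence to make the contribution tend to zero as the cells shrink. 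If this estimate fails for a general cell, the fallback is to choose cells adapted to the spectral projections of the marginals, so that all vectors in a cell have essentially the same finite-dimensional Schmidt support. Then continuity of the entropy on that finite-rank support gives both (i) and (ii) directly.
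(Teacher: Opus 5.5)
You have the paper's architecture: Borel phase choice, lift to unit amplitudes, cell means as pure atoms, the variance identity, the residual as normalized deviations weighted by their squared norms, and geometric iteration. The analytic core is missing, however, and the tools you propose for (i) and (ii) fail for cells that are only small in norm with small oscillation of $e$.

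\textbf{Step (i).} In infinite dimensions the Holevo-type correction $S(\bar\sigma)-\int S(\sigma_\psi)$ is not controlled by the diameter. Take a cell carrying $\psi_x=\sqrt{1-\epsilon}\,\ket{0}\ket{0}+\sqrt{\epsilon}\,\ket{x}\ket{x}$, $x\ge1$, with weights $p_x$ of infinite Shannon entropy $H(p)$. It has diameter $\sqrt{2\epsilon}$ and $e\equiv h(\epsilon)=-\epsilon\log\epsilon-(1-\epsilon)\log(1-\epsilon)$. Yet $S(\bar\sigma)=h(\epsilon)+\epsilon H(p)=+\infty$, while the mean vector itself has entanglement $O(h(\epsilon))$. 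Passing through the averaged marginal loses everything, because concavity of $S$ points the wrong way.

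\textbf{Step (ii).} Take $\psi_j=\sqrt{1-\epsilon}\,\ket{0}\ket{0}+\sqrt{\epsilon}\,\chi_j$, $j=1,\dots,K$, equally weighted. Here $\chi_j$ is maximally entangled of Schmidt rank $N$, the blocks are mutually orthogonal, and $\epsilon\log N=L$. Again the diameter is $\sqrt{2\epsilon}$ and $e$ is constant. With $v$ the cell mean, however, $\|\psi_j-v\|^2e\bigl((\psi_j-v)/\|\psi_j-v\|\bigr)\approx(1-1/K)L$. So the residual cost is comparable to the cell's own cost, not $O(\eta)$ times its mass, however small $\epsilon$ is.

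Your mixture estimate, done correctly via $\lambda\omega\le\sigma\Rightarrow S(\omega)\le S(\sigma)/\lambda$, only gives $\|\psi-v_k\|^2e(\cdot)\le 2e(\psi)+2e(v_k/\|v_k\|)+4\log2$. That is an $O(1)$ bound. Dominated convergence would need this quantity to tend to $0$ pointwise, which does not follow from $\|\psi-v_k\|\to0$: think of differences $\sqrt{\epsilon_n}\,\chi_n$ with $\epsilon_n\log N_n=1$.

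The finite-Schmidt-support fallback does not apply either. The input vectors, and at every later stage the normalized differences, typically have infinite Schmidt rank. Making ``essentially the same support'' precise requires exactly the entropic tail control that is missing.

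\textbf{What is missing.} You need to apply Jensen, and measure smallness, with a convex lower semicontinuous functional of the amplitude itself rather than of its marginal. The paper uses $\J(A)=\sum_j f(s_j(A))$, with $f(t)=t^2(3-2\log t)$ on $[0,1]$ and $f(t)=2t^2+1$ for $t\ge1$. This equals $3+e$ on unit amplitudes and satisfies
\begin{itemize}
\item $\J(D)\ge2\|D\|_2^2$;
\item $\E(D)=\|D\|_2^2e(D/\|D\|_2)\le\J(D)$ when $\|D\|_2\le1$;
\item $\J(A+B)\le2\J(A)+2\J(B)$.
\end{itemize}

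Cells are chosen so that $\J(A-B)<\eta$ for all pairs in a cell. This is possible because $\J(A-A_r)\to0$ for singular-value truncations when $\J(A)<+\infty$, and rational finite-rank matrices approximate $A_r$ in $\J$. Norm smallness alone would not do, by the example above.

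Jensen for $\J$ then gives $\E(y_i)\le c_i+3(m_i-w_i)$ with no Holevo term. Here $m_i$ is the cell mass, $c_i$ its cost, $y_i$ the weighted mean amplitude, and $w_i=\|y_i\|_2^2$. The pair-difference form of the residual,
\[
R_i-\proj{y_i}=\tfrac1{2m_i}\int_{C_i\times C_i}\proj{A-B}\,\mu(dA)\mu(dB),
\]
gives residual mass at most $\eta t/4$ and residual cost at most $\eta t/2$, where $t$ is the current mass. These three bounds are what sum to
\[
c_\mu+\frac{\eta/2+3\eta/4}{1-\eta/4}=c_\mu+\frac{5\eta}{4-\eta}.
\]

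Your deviation form $\psi-v_k$ would also work with $\J$-small cells, since Jensen gives $\J(\psi-v_k)<\eta$. However, it only yields residual cost at most $\eta t$, and hence a worse constant than the one in the statement.
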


\begin{unnumberedremark}
For every $\varepsilon>0$,
Theorem~\ref{thm:discretization} gives a countable ensemble with
cost at most $c_\mu+\varepsilon$ by choosing $\eta$ sufficiently
small. A countable ensemble with cost at most $c_\mu$ need not
exist: zero cost is not attained for a countably nondecomposable
separable state, as explained in Corollary~\ref{cor:consequences}.
\end{unnumberedremark}

Here discretization permits new pure states, including states of
infinite Schmidt rank; the output atoms need not belong to the support of the
original measure $\mu$.

\begin{corollary}\label{cor:coincidence}
The equality  $\Ed(\rho)=\Ec(\rho)$ holds for every state $\rho$ on $\mathcal{H}_{AB}$,
with infinite values allowed.
\end{corollary}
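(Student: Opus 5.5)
The statement to prove is Corollary~\ref{cor:coincidence}. By \eqref{eq:eof-order} it suffices to show $\Ed(\rho)\le\Ec(\rho)$, and I will get this directly from Theorem~\ref{thm:discretization}.

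\emph{Infinite case.} If $\Ec(\rho)=+\infty$, then \eqref{eq:eof-order} forces $\Ed(\rho)=+\infty$ as well, so the equality holds with infinite values.

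\emph{Finite case.} Assume $\Ec(\rho)<\infty$ and fix $\varepsilon>0$.
\begin{enumerate}
\item By the definition \eqref{eq:continuous} of the infimum, choose a Borel probability measure $\mu$ on $\ext\St(\HH_{AB})$ with barycenter $\rho$ and $c_\mu=\int e(\pi)\,\mu(d\pi)\le \Ec(\rho)+\varepsilon$. In particular $c_\mu<\infty$, so $\mu$ satisfies the hypothesis of Theorem~\ref{thm:discretization}.
\item Choose $\eta\in(0,1)$ with $5\eta/(4-\eta)\le\varepsilon$. This is possible because the function $\eta\mapsto 5\eta/(4-\eta)$ is continuous and vanishes at $\eta=0$.
\item Theorem~\ref{thm:discretization} then gives a countable pure-state ensemble $\{q_j,\psi_j\}$ with $\sum_j q_j\proj{\psi_j}=\rho$ and
\[
\sum_j q_j e(\psi_j)\le c_\mu+\frac{5\eta}{4-\eta}\le \Ec(\rho)+2\varepsilon .
\]
\end{enumerate}
This ensemble is admissible in \eqref{eq:discrete}. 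The theorem provides convergence of the operator sum in the required sense (trace norm); if it is only stated weakly, it upgrades to trace norm because the partial sums are positive and increase to a trace-class operator. Hence $\Ed(\rho)\le\Ec(\rho)+2\varepsilon$. Letting $\varepsilon\to0$ gives $\Ed(\rho)\le\Ec(\rho)$, and together with \eqref{eq:eof-order} this proves the equality.

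There is no real obstacle here; the whole difficulty sits inside Theorem~\ref{thm:discretization}.
\begin{itemize}
\item One point to state explicitly is that the theorem applies to every $\mu$ with finite cost. No finiteness of marginal entropies is assumed, and this is exactly what removes the restriction in \cite{shirokov2010}.
\item The other point is the conventions. The infimum in \eqref{eq:continuous} runs over measures whose cost may be infinite. Those measures never matter in the finite case, since we only need near-optimal measures, and those automatically have finite cost.
\end{itemize}
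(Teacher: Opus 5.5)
Your proof is correct and follows the paper's argument: the trivial inequality \eqref{eq:eof-order}, the infinite case, and then Theorem~\ref{thm:discretization} applied to a (nearly) optimal finite-cost measure, followed by $\varepsilon\downarrow0$. The only difference is that the paper starts from a measure attaining $\Ec(\rho)$, citing the attainability result of Section~5 of Shirokov's 2010 paper. You instead take an $\varepsilon$-optimal measure directly from the definition of the infimum, which is slightly more self-contained and costs nothing, since the theorem only requires finite cost.
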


\begin{proof} Every countable ensemble defines a Borel probability measure on
pure states with the same barycenter and the same cost.
Hence $\Ec(\rho)\le\Ed(\rho)$ in the extended nonnegative reals.
If $\Ec(\rho)<+\infty$, fix $\varepsilon>0$ and choose a Borel
ensemble with the entanglement cost equal to $\Ec(\rho)$. The existence of such an ensemble is proved in \cite[Section 5]{shirokov2010}.
Choose $0<\eta<1$ so that $5\eta/(4-\eta)<\varepsilon$.
By Theorem~\ref{thm:discretization}, there is a countable ensemble
with the cost less than $\Ec(\rho)+\varepsilon$, and therefore
$\Ed(\rho)\le\Ec(\rho)+\varepsilon$.
Letting $\varepsilon\downarrow0$ gives the reverse inequality.
If $\Ec(\rho)=+\infty$, the already established inequality
$\Ec(\rho)\le\Ed(\rho)$ forces $\Ed(\rho)=+\infty$.
\end{proof}

\section{The proof of Theorem \ref{thm:discretization}}\label{sec:proof}

\subsection{Vectors and amplitude operators}\label{sec:amplitudes}
Fix orthonormal bases $(\phi_k)$ of $\mathcal{H}_A$ and $(\varphi_l)$ of $\mathcal{H}_B$.
Write a vector $\psi$ in $\HH_{AB}$ and its coefficient operator as
\begin{equation}\label{eq:coefficients}
 |\psi\rangle=\sum_{k,l}a_{kl}\,|\phi_k\otimes \varphi_l\rangle,
 \qquad A |\varphi_l\rangle=\sum_k a_{kl}|\phi_k\rangle.
\end{equation}
The condition $\sum_{k,l}|a_{kl}|^2<\infty$ is exactly the condition
that $A:\mathcal{H}_B\to \mathcal{H}_A$ is a Hilbert--Schmidt operator, and
\[
 \|\psi\|^2=\sum_{k,l}|a_{kl}|^2=\tr A^*A=\|A\|_2^2.
\]
Thus \eqref{eq:coefficients} is a linear isometric identification
of $\mathcal{H}_{AB}$ with $\HS(\mathcal{H}_B,\mathcal{H}_A)$ -- the separable Hilbert space of all Hilbert--Schmidt operators from $\mathcal{H}_B$ to $\mathcal{H}_A$ with the inner product
$$
\langle A,B\rangle\doteq \tr A^*B.
$$

We call operators in $\HS(\mathcal{H}_B,\mathcal{H}_A)$ \emph{amplitude operators} or, briefly, \emph{amplitudes}.  Taking the partial trace in the above bases gives
\[
 \langle \phi_k,(\tr_B\proj{\psi})\phi_{k'}\rangle
       =\sum_l a_{kl}\overline{a_{k'l}}
       =\langle \phi_k,AA^*\phi_{k'}\rangle.
\]
For infinite sums, first perform this computation for finite
coefficient matrices and pass to the limit. Indeed,
\[
 \|AA^*-BB^*\|_1
 \le(\|A\|_2+\|B\|_2)\|A-B\|_2,
\]
and the analogous rank-one estimate holds for the bipartite
vectors; partial trace is continuous in trace norm. Consequently
$\tr_B\proj{\psi}=AA^*$ for every $\psi\in \mathcal{H}_{AB}$.

If $A=\sum_j s_j\ket{u_j}\bra{v_j}$ is a singular value
decomposition of $A\in\HS(\mathcal{H}_B,\mathcal{H}_A)$, the corresponding vector is
$|A\rangle\doteq\sum_j s_j |u_j\otimes\overline{v_j}\rangle$, where the bar is coordinate
conjugation in the basis $(\varphi_l)$. This is a Schmidt decomposition of the vector $|A\rangle$. In
particular, the nonzero eigenvalues of $AA^*$ are $s_j(A)^2$.
We list singular values in decreasing order and append zeros when
necessary. For a unit amplitude $A$ this gives
\[
 e(A)=-\sum_j s_j(A)^2\log s_j(A)^2.
\]
We use $e(A)$ as shorthand for the entanglement of its associated
unit vector. For arbitrary amplitudes define the homogeneous cost
\begin{equation}\label{eq:homogeneous}
 \E(A)=\|A\|_2^2 S\left(\frac{AA^*}{\|A\|_2^2}\right)
 \quad(A\ne0),\qquad \E(0)=0.
\end{equation}
For $z\ne0$, normalization removes $|z|^2$, so
$\E(zA)=|z|^2\E(A)$; at zero we use the displayed convention.
Thus a weighted pure atom with amplitude $y=\sqrt q\,A$ and
$\|A\|_2=1$ contributes exactly $\E(y)=q e(A)$.
The notation $\proj{A}$ always denotes the rank-one operator on the
\emph{bipartite} space $\mathcal{H}_{AB}$. It is distinct from $AA^*$, an operator
on $\mathcal{H}_A$. Both have trace $\|A\|_2^2$.

\subsection{Overview of the proof}\label{sec:overview}
The proof separates one extraction step from the iteration of that
step. The objects being averaged are unit amplitude operators, not
pure states  (in contrast to the proof of Lemma 1 in \cite{holevo2005}!). This distinction allows the average to remain a
vector and hence to determine a pure output atom.

First choose a Borel unit-vector representative of each input pure state. We introduce a convex lower semicontinuous spectral
functional $\J$ on amplitude space that equals $3+e$ on unit amplitudes
and satisfies
\[
 \J(D)\ge2\|D\|_2^2,
 \qquad \E(D)\le\J(D)\quad\text{if }\|D\|_2\le1.
\]
The unit amplitudes of finite entanglement admit a countable Borel
partition into cells such that $\J(A-B)<\eta$ for every pair in a
cell. The partition uses approximation by finite-rank operators in
$\J$, followed by approximation of their coefficients by rational
numbers. Hilbert--Schmidt norm density alone would not give the
needed entropy estimate.

For a cell $C$ of measure $m>0$, let $M=m^{-1}\int_C A\,\mu(dA)$.
Replace part of its state barycenter by the pure atom with
unnormalized amplitude $y=\sqrt m M$. The identity
\[
 \int_C\proj{A}\,\mu(dA)-\proj{y}
   =\frac1{2m}\int_{C\times C}\proj{A-B}\,\mu(dA)\,\mu(dB)
\]
shows that the part left over is positive. It also represents that
part by the differences $D=A-B$. Relative to the pair measure
$(2m)^{-1}(\mu|_C)\otimes(\mu|_C)$, each nonzero difference
contributes the normalized vector $D/\|D\|_2$ with density
$\|D\|_2^2$.
This weight is essential: the entropy of the normalized difference
can be large, but its weighted cost is $\E(D)<\eta$.

For an incoming measure of mass $t$ and cost $c$, summing over the
cells gives a residual measure of mass $t'\le\eta t/4$ and cost
$c'\le\eta t/2$. Jensen's inequality for $\J$ bounds the cost of
the extracted atoms by $c+3t'$. The residual measure is again a
finite Borel measure on unit amplitudes of finite entanglement, so
exactly the same construction applies to it.

Starting with $t_0=1$ and $c_0=c_\mu$, repeat with a fixed
$0<\eta<1$. The residual trace is at most $(\eta/4)^n$ after
$n$ steps. Every step is an exact positive decomposition of the
previous remainder. Therefore the countable union of the extracted
atoms sums in trace norm to the original state. The residual costs
and the extraction errors form summable series, whose total is at
most $5\eta/(4-\eta)$. This proves the discretization theorem and,
by taking $\eta\downarrow0$, equality of the two roofs.

Sections~\ref{sec:functional} and~\ref{sec:cells} construct the
functional and the cells. Section~\ref{sec:extraction} proves the
extraction step, including the residual measure, and
Section~\ref{sec:iteration} carries out the iteration.

\subsection{A convex function of amplitude operators}\label{sec:functional}
Define the function
\begin{equation}\label{eq:f}
 \J(A)=\sum_{j\ge1}f(s_j(A)),\quad\textrm{ where }\quad f(t)=
 \begin{cases}
 t^2(3-2\log t),&0<t\le1,\\
 2t^2+1,&t\ge1,\\
 0,&t=0,
 \end{cases}
\end{equation}
on $\HS(\mathcal{H}_B,\mathcal{H}_A)$.\footnote{It is essential that the function $f(t)$ is  convex and continuously differentiable on $\mathbb{R}_+$ and that $\,f(0)=0\,$ and $\,f(t)\geq t^2(3-2\log t)$ for all $t>0$.} The sum may be infinite. On the Hilbert--Schmidt unit sphere,
\begin{equation}\label{eq:unit}
 \J(A)=3+e(A).
\end{equation}

\begin{lemma}\label{lem:convex}
The function $\J$ is convex and lower semicontinuous in the Hilbert--Schmidt
norm. For all $A,B$ in $\HS(\mathcal{H}_B,\mathcal{H}_A)$  and every $a\ge1$,
\begin{align}
 \J(A)&\ge2\|A\|_2^2,\label{eq:normbound}\\
 \J(aA)&\le a^2\J(A),\label{eq:scaling}\\
 \J(A+B)&\le2\J(A)+2\J(B).\label{eq:quasitriangle}
\end{align}
If $\,d=\|D\|_2^2\le1$, then, with $\,0\log0=0$,
\begin{equation}\label{eq:entropyidentity}
 \E(D)=\J(D)-3d+d\log d\le\J(D).
\end{equation}
\end{lemma}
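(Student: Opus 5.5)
The plan is to reduce everything to properties of the scalar function $f$ combined with standard facts about unitarily invariant (spectral) functions of singular values.

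\emph{Convexity.} Since $f$ is convex, continuous, nondecreasing on $\mathbb{R}_+$ and $f(0)=0$, a symmetric gauge-type argument applies. By the Ky Fan majorization $s(A+B)\prec_w s(A)+s(B)$ (weak majorization, first on finite-rank operators) and the fact that $\sum_j f(x_j)$ is Schur-convex and increasing for convex increasing $f$ with $f(0)=0$, we get
\[
\J(\lambda A+(1-\lambda)B)\le\sum_j f\bigl(\lambda s_j(A)+(1-\lambda)s_j(B)\bigr)\le\lambda\J(A)+(1-\lambda)\J(B).
\]
To pass to general Hilbert--Schmidt operators I would either apply the majorization inequality directly, since it holds for compact operators, or truncate to finitely many singular values and take limits.

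\emph{Lower semicontinuity.} Each $s_j$ is continuous in operator norm, hence in $\|\cdot\|_2$. Thus $\J=\sup_N\sum_{j\le N}f(s_j)$ is a supremum of continuous functions, because $f\ge0$.

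\emph{Inequalities (\ref{eq:normbound})--(\ref{eq:quasitriangle}).} These are termwise checks on $f$.
\begin{itemize}
\item For (\ref{eq:normbound}), check $f(t)\ge2t^2$. On $(0,1]$ we have $3-2\log t\ge3$, and for $t\ge1$ it is immediate.
\item For (\ref{eq:scaling}), check $f(at)\le a^2f(t)$ for $a\ge1$. Equivalently, $g(t)=f(t)/t^2$ is nonincreasing. On $(0,1]$, $g=3-2\log t$ is decreasing; for $t\ge1$, $g=2+t^{-2}$ is decreasing; and $g$ is continuous at $1$ with value $3$.
\item For (\ref{eq:quasitriangle}), combine convexity with scaling:
\[
\J(A+B)=\J\bigl(2\cdot\tfrac{A+B}{2}\bigr)\le4\J\bigl(\tfrac{A+B}{2}\bigr)\le2\J(A)+2\J(B).
\]
\end{itemize}

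\emph{The identity (\ref{eq:entropyidentity}).} For $d\le1$ every $s_j\le\|D\|_2\le1$, so the first branch of $f$ applies:
\[
\J(D)=\sum_j s_j^2(3-\log s_j^2)=3d-\sum_j s_j^2\log s_j^2 .
\]
On the other hand,
\[
\E(D)=d\,S(DD^*/d)=-\sum_j s_j^2\log(s_j^2/d)=-\sum_j s_j^2\log s_j^2+d\log d .
\]
Subtracting gives $\E(D)=\J(D)-3d+d\log d$. This holds including the case where both sides are $+\infty$, provided the sums are rearranged legitimately; they are, since all terms are nonnegative once each $s_j\le1$. Finally, $-3d+d\log d\le0$ for $d\in[0,1]$, which gives $\E(D)\le\J(D)$. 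The case $D=0$ is trivial. The case $d=1$ recovers (\ref{eq:unit}).

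The main obstacle is the convexity step in infinite dimensions: the inequality $\sum_j f(s_j(A+B))\le\sum_j f(s_j(A)+s_j(B))$, which requires weak majorization together with monotonicity of $f$. I would invoke Ky Fan's inequality $\sum_{j\le k}s_j(A+B)\le\sum_{j\le k}(s_j(A)+s_j(B))$ for compact operators, followed by the Hardy--Littlewood--Pólya/Tomić lemma for convex nondecreasing $f$, applied to the first $N$ terms and then letting $N\to\infty$.
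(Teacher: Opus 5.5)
Your proof is correct. Apart from the convexity step, it matches the paper's argument closely: termwise checks on $f$ via the nonincreasing ratio $f(t)/t^2$, lower semicontinuity as a supremum of continuous partial sums, the quasi-triangle inequality from convexity plus scaling, and the same entropy bookkeeping for $\E(D)$.

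\emph{Convexity.} Here you take a genuinely different route. You invoke Ky Fan's weak majorization $\sum_{j\le k}s_j(\lambda A+(1-\lambda)B)\le\sum_{j\le k}\bigl(\lambda s_j(A)+(1-\lambda)s_j(B)\bigr)$ and the Tomi\'c--Weyl inequality for convex nondecreasing $f$ on the first $N$ terms. You then let $N\to\infty$, which is legitimate because $f\ge0$ makes both sides nondecreasing in $N$. Termwise convexity of $f$ finishes the argument.

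The paper never cites a majorization lemma. It writes $f(t)=\int_0^{\infty}(t-u)_+f''(u)\,du$ and hence $\J(A)=\int_0^{\infty}K_u(A)f''(u)\,du$, where $K_u(A)=\sum_j(s_j(A)-u)_+=\sup_n\bigl(\sum_{j\le n}s_j(A)-nu\bigr)$. Each $K_u$ is convex because the Ky Fan sums are suprema of real linear functionals. This is in effect an inline proof of exactly the special case of Tomi\'c--Weyl you cite, since the standard proof of that inequality also reduces to the hinge functions $(t-u)_+$.

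\emph{Trade-off.} Your version is shorter and uses only convexity and monotonicity of $f$, with no need for $f''$. In exchange it relies on two cited results for compact operators plus a truncation limit. The paper's version is self-contained and handles $+\infty$ values directly through Tonelli. It does require $f(0)=f'(0+)=0$ and local absolute continuity of $f'$, which is why its footnote stresses that $f$ is $C^1$.

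\emph{Minor omission.} You treat convexity and monotonicity of $f$ as given. The paper checks them: $f''(t)=-4\log t\ge0$ on $(0,1)$, $f''=4$ on $(1,\infty)$, and $f'(1^-)=f'(1^+)=4$. The last check is what rules out a downward kink at $t=1$ that would destroy convexity. You should include this verification.
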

\begin{proof}
For $0<t<1$,
$f'(t)=4t(1-\log t)$ and $f''(t)=-4\log t\ge0$.
For $t>1$, $f'(t)=4t$ and $f''(t)=4$. Both $f$ and $f'$ match
at $1$, and $f'(0+)=0$. Hence $f$ is increasing and convex.
Moreover
\[
 \frac{f(t)}{t^2}=
 \begin{cases}3-2\log t,&0<t\le1,\\2+t^{-2},&t\ge1\end{cases}
\]
is decreasing and is at least $2$: its derivatives on the two
pieces are $-2/t$ and $-2/t^3$, and its values agree at $1$.
It follows term by term that $f(s_j(A))\ge2s_j(A)^2$; summing gives
\eqref{eq:normbound}. For $a\ge1$ and $t>0$, monotonicity of this
ratio gives
\[
 \frac{f(at)}{a^2t^2}\le\frac{f(t)}{t^2},
 \qquad f(at)\le a^2 f(t).
\]
The latter also holds at $t=0$. Since $s_j(aA)=a s_j(A)$, summing
proves \eqref{eq:scaling}, including when $\J(A)=+\infty$.
The second branch of $f$ extends convexity beyond $1$ while keeping
this scaling estimate. It allows us to apply convexity on the
whole vector space, even though unit amplitudes have singular values
at most $1$.

For $u>0$ let
\[
 K_u(A)=\sum_j(s_j(A)-u)_+
       =\sup_{n\ge0}\left(\sum_{j=1}^n s_j(A)-nu\right).
\]
Each finite Ky Fan sum $\,\sum_{j=1}^n s_j(A)\,$ is a convex function of $A$. Indeed, its variational
formula is the supremum of
$\operatorname{Re}\sum_{j=1}^n\langle u_j,Av_j\rangle$
over orthonormal systems $(u_j)$ and $(v_j)$, and so is a supremum of
real linear functionals of $A$. If a space has finite dimension,
restrict $n$ to the available dimension. Thus $K_u$ is convex.
Only finitely many singular values exceed $u$, so $K_u(A)<+\infty$
for a Hilbert--Schmidt operator.
Since $f'(0+)=f(0)=0$ and $f'$ is locally absolutely continuous,
\[
 f(t)=\int_0^{+\infty}(t-u)_+f''(u)\,du.
\]
Tonelli's theorem now gives
\[
 \J(A)=\int_0^{+\infty} K_u(A)f''(u)\,du,
\]
which proves convexity also for extended values. The singular values
are individually continuous in Hilbert--Schmidt norm, and $f\ge0$.
The supremum of the continuous finite partial sums defining $\J$ is
therefore lower semicontinuous.
Convexity and \eqref{eq:scaling} imply
\[
 \J(A+B)\le\tfrac12\J(2A)+\tfrac12\J(2B)
          \le2\J(A)+2\J(B).
\]
Finally, if $0<d\le1$, put $x_j=s_j(D)^2$. Then
$0\le x_j\le1$, $\sum_jx_j=d$, and
\[
 \J(D)=3d-\sum_jx_j\log x_j,
 \qquad
 \E(D)=-\sum_j x_j\log(x_j/d)
       =-\sum_jx_j\log x_j+d\log d.
\]
Subtracting gives \eqref{eq:entropyidentity}. The entropy sums are
nonnegative, and the other terms are finite, so the identity also
holds if either entropy sum is infinite. At $d=0$, $D=0$ and both
sides vanish. Finally $-3d+d\log d\le0$ for $0\le d\le1$,
which proves the upper bound.
\end{proof}

\subsection{Measurable representatives and countable cells}\label{sec:cells}
Set
\[
 \U=\{A\in\HS(\mathcal{H}_B,\mathcal{H}_A):\|A\|_2=1,\ \J(A)<+\infty\}.
\]
The unit sphere is closed and
$\{\J<+\infty\}=\bigcup_{n\ge1}\{\J\le n\}$ is Borel by
the lower semicontinuity of $\J$. Thus $\U$ is a Borel subset of the separable
Hilbert space $\HS(\mathcal{H}_B,\mathcal{H}_A)$. On $\U$, $e=\J-3$ is a finite Borel function.

\subsubsection{A Borel choice of phase}\label{sec:selection}
The measure $\mu$  is a measure on pure states,
whereas taking cell means requires unit vectors. We now give the
representative explicitly. Let $\ext\St(\mathcal{H}_{AB})$ be the set of
pure states, with its trace norm Borel structure,
and fix an orthonormal basis $(h_n)$ of $\mathcal{H}_{AB}$. Put
\[
 g_n(\pi)=\langle h_n,\pi h_n\rangle,
 \qquad
 D_n=\{\pi\in\ext\St(\mathcal{H}_{AB}):g_n(\pi)>0,\ g_j(\pi)=0\text{ for }j<n\}.
\]
Each $g_n$ is nonnegative and continuous in trace norm. Hence each
$D_n$ is Borel. Since $\sum_n g_n(\pi)=\tr\pi=1$, there is a
first index with positive diagonal entry; the sets $D_n$ form a partition
of $\ext\St(\mathcal{H}_{AB})$. On $D_n$ define
\begin{equation}\label{eq:section}
 \psi(\pi)=\frac{\pi h_n}{\sqrt{g_n(\pi)}}.
\end{equation}
The map $\pi\mapsto\pi h_n$ is continuous from trace norm to vector
norm, and the denominator is positive on $D_n$. Thus this formula
is continuous on each piece and the countably defined section is
Borel. Moreover, $\pi^2=\pi$ gives
$\|\pi h_n\|^2=g_n(\pi)$, so $\|\psi(\pi)\|=1$. The vector $\psi(\pi)$
lies in the one-dimensional range of $\pi$, and therefore
$\proj{\psi(\pi)}=\pi$. Its $n$th coefficient is the positive
number $\sqrt{g_n(\pi)}$, which fixes its phase.

Composing this section with the isometry defined in \eqref{eq:coefficients} gives a Borel
map $L$ from pure states to unit amplitudes. If the original
measure is $\mu_0$, its pushforward (the image under $L$) $\widetilde\mu=\mu_0\circ L^{-1}$
satisfies
\[
 \int_{\HS(\mathcal{H}_B,\mathcal{H}_A)}\proj{A}\,\widetilde\mu(dA)=\int_{\ext\St(\mathcal{H}_{AB})}\!\!\pi\,\mu_0(d\pi),
 \quad
 \int_{\HS(\mathcal{H}_B,\mathcal{H}_A)}\!\! e(A)\,\widetilde\mu(dA)=\int_{\ext\St(\mathcal{H}_{AB})}\!\! e(\pi)\,\mu_0(d\pi).
\]
These equalities follow from $\proj{L(\pi)}=\pi$ and the
pushforward integration formula. In particular, finite average
entanglement implies $\widetilde\mu(\U)=1$. We restrict this
measure to $\U$ without changing its barycenter or cost. This is a
full-measure restriction; no assertion about its topological support
is needed. The phase choice need not be continuous across the sets
$D_n$. The partition below works for every unit amplitude in $\U$,
so this lack of global continuity causes no difficulty.

For clarity about the operator integrals, for arbitrary vectors
$x,y\in \mathcal{H}_{AB}$ one has
\begin{equation}\label{eq:rankone-continuity}
 \|\proj{x}-\proj{y}\|_1
 \le(\|x\|+\|y\|)\|x-y\|,
 \qquad \|\ket{x}\bra{y}\|_1=\|x\|\|y\|.
\end{equation}
The first inequality follows by writing the difference as
$\ket{x-y}\bra{x}+\ket{y}\bra{x-y}$. The Hilbert--Schmidt
space and the trace class on $\mathcal{H}_{AB}$ are separable. Consequently the
functions $A\mapsto \proj{A}$ and $A\mapsto A$  are strongly measurable. So, their integrals against
finite measures therefore exist as Bochner integrals. Applying the
continuous trace functional under such an integral is legitimate.

\subsubsection{A partition that controls weighted entropy}
\begin{lemma}\label{lem:cells}
For every $\eta>0$, $\U$ has a countable Borel partition $(C_i)$
such that $A,B\in C_i$ implies $\J(A-B)<\eta$.
\end{lemma}
\begin{proof}
Enumerate all coefficient matrices with finite support and rational
real and imaginary entries in the fixed bases as $(B_i)$. There are
countably many such matrices. We first prove the stronger density
statement
\begin{equation}\label{eq:dense}
 \J(A)<+\infty\quad\Longrightarrow\quad
 \inf_i\J(A-B_i)=0.
\end{equation}
Fix $\delta>0$ and choose a singular value decomposition of $A$.
Let $A_r$ retain its first $r$ terms. The singular values of
$A-A_r$ are precisely the omitted ones, including zeros, and hence
\[
 \J(A-A_r)=\sum_{j>r}f(s_j(A))\rightarrow0\quad\textrm{ as }\quad  r\rightarrow+\infty.
\]
Choose $r$ so that $\J(A-A_r)<\delta/4$.

To approximate $A_r$, write it as a sum of $r$ rank-one operators.
Approximate each of the finitely many left and right vectors by
finite-support vectors with rational real and imaginary coordinates,
and each coefficient by a rational complex number. These choices
produce finite-support rational matrices $B^{(n)}$ of rank at most
$r$ with $\,\|A_r-B^{(n)}\|_2\to0\,$  as  $\,n\to+\infty$. The convergence follows, term
by term, from
\[
 \|\ket{u}\bra{v}-\ket{u'}\bra{v'}\|_2
 \le\|u-u'\|\|v\|+\|u'\|\|v-v'\|.
\]
The rank of $A_r-B^{(n)}$ is at most $2r$. Every singular value
is bounded by its Hilbert--Schmidt norm, so
\[
 \J(A_r-B^{(n)})
 \le2r f(\|A_r-B^{(n)}\|_2)\longrightarrow0\quad\textrm{ as }\quad n\to+\infty.
\]
Choose $n$ such that $\J(A_r-B^{(n)})<\delta/4$. By
\eqref{eq:quasitriangle},
\[
 \J(A-B^{(n)})
 \le2\J(A-A_r)+2\J(A_r-B^{(n)})<\delta.
\]
This proves \eqref{eq:dense}; the case $A=0$ follows by taking
$B_i=0$. The rank restriction in the second approximation is what
permits the finite sum bound. No continuity of $\J$ on its full
effective domain has been assumed.

For the given $\eta$, the sets
\[
 U_i=\{A\in\U:\J(A-B_i)<\eta/4\}
\]
cover $\U$ by \eqref{eq:dense}. Each is Borel: the translated
functional is lower semicontinuous, and its strict sublevel set is
the union of its closed sublevel sets with rational levels less
than $\eta/4$, intersected with $\U$. Set
\[
 C_1=U_1,\qquad C_i=U_i\setminus\bigcup_{j<i}U_j\quad(i>1).
\]
These sets are Borel, disjoint and cover $\U$; empty cells can be
omitted. For $A,B\in C_i$, singular values are unchanged by a
minus sign, and therefore
\[
 \J(A-B)\le2\J(A-B_i)+2\J(B_i-B)<\eta.
\]
The centres $B_i$ label the cells; they are not output atoms, and
need not be normalized.
\end{proof}

\subsection{Exact extraction and its remainder}\label{sec:extraction}
The recursive construction uses measures of mass less than one.
We therefore formulate the extraction step for an arbitrary finite
Borel measure $\mu$ on $\U$ with finite entanglement cost. Write
\[
 R=\int_{\U}\proj{A}\,\mu(dA),\qquad
 t=\mu(\U)=\tr R,\qquad c=\int_{\U}e(A)\,\mu(dA)<+\infty.
\]
The operator $R$ is positive and trace class. The cost $c$ refers
to this particular representing measure.

\begin{lemma}\label{lem:extraction}
For every $0<\eta<1$ there are countably many vectors $y_i$ in $\HS(\mathcal{H}_B,\mathcal{H}_A)\cong\HH_{AB}$ and a
finite Borel measure $\mu'$ on $\U$ such that
\begin{align}
 R&=\sum_i\proj{y_i}+R',\qquad
       R'=\int_{\U}\proj{A}\,\mu'(dA)\ge0,\label{eq:extract}\\
 t'&:=\tr R'\le\frac{\eta}{4}t,\label{eq:remtrace}\\
 c'&:=\int_{\U}e(A)\,\mu'(dA)\le\frac{\eta}{2}t,\label{eq:remcost}\\
 \sum_i\E(y_i)&\le c+3t'.\label{eq:atomcost}
\end{align}
The sum in \eqref{eq:extract} converges in trace norm. Each atom
satisfies $\proj{y_i}\le R$. If $\,t=0$, all families and measures
may be taken to be empty.
\end{lemma}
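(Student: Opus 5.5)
Apply Lemma~\ref{lem:cells} with the given $\eta$ to obtain the countable Borel partition $(C_i)$ of $\U$. For each cell with $m_i=\mu(C_i)>0$ (cells of measure zero are discarded, as they contribute nothing), define the Bochner mean $M_i=m_i^{-1}\int_{C_i}A\,\mu(dA)$ and the atom $y_i=\sqrt{m_i}\,M_i$. The residual measure $\mu'$ on $\U$ is the image of the pair measure $\nu_i=(2m_i)^{-1}(\mu|_{C_i})\otimes(\mu|_{C_i})$ under the Borel map $(A,B)\mapsto (A-B)/\|A-B\|_2$ on $\{A\ne B\}$, weighted by the density $\|A-B\|_2^2$, summed over $i$. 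Formally,
\[
\mu'(E)=\sum_i\int_{C_i\times C_i}\mathbf 1_E\!\left(\tfrac{A-B}{\|A-B\|_2}\right)\|A-B\|_2^2\,\nu_i(dA\,dB).
\]
The normalized difference lies in $\U$ because $\J(A-B)<\eta<\infty$ and $\J$ is homogeneous-dominated via \eqref{eq:scaling}. For $\|A-B\|_2\le 2$, rescaling by $a=\|A-B\|_2^{-1}$ may have $a<1$, so I would instead argue finiteness of entropy directly from \eqref{eq:entropyidentity} or from $\J(D)<\infty$, which forces $\sum f(s_j(D))<\infty$, and that is scale invariant in finiteness.

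The key identity is the cellwise variance formula
\[
\int_{C}\proj{A}\,\mu(dA)-\proj{y}=\frac1{2m}\int_{C\times C}\proj{A-B}\,\mu(dA)\mu(dB),
\]
which I verify by expanding $\proj{A-B}$ and using $\int_C\ket{A}\,\mu(dA)=mM$, with the Bochner integrals justified by \eqref{eq:rankone-continuity}. The right-hand side equals $\int\proj{A}\,\mu'_i(dA)$ by construction, since $\proj{D}=\|D\|_2^2\proj{D/\|D\|_2}$. Summing over $i$ yields \eqref{eq:extract}, with trace-norm convergence because all terms are positive with total trace at most $t$. The bound $\proj{y_i}\le R$ holds because $\proj{y_i}\le\int_{C_i}\proj{A}\,\mu\le R$.

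Traces and costs then follow from the cell property. For each pair with $D=A-B$, we have $\|D\|_2^2\le \J(D)/2<\eta/2$ by \eqref{eq:normbound}, so $t'\le\sum_i (2m_i)^{-1}m_i^2\,\eta/2=\eta t/4$. The weighted cost of a pair is $\|D\|_2^2\,e(D/\|D\|_2)=\E(D)\le \J(D)<\eta$ by \eqref{eq:entropyidentity}, which applies since $d\le1$. This gives $c'\le\sum_i m_i\eta/2=\eta t/2$. For \eqref{eq:atomcost}, Jensen's inequality for the convex lower semicontinuous $\J$ (valid for Bochner means of measures via lower semicontinuity and Hahn--Banach, or via the Ky Fan integral representation) gives $\J(M_i)\le m_i^{-1}\int_{C_i}(3+e)\,d\mu$. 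Since $\|M_i\|_2\le1$, \eqref{eq:entropyidentity} gives $\E(y_i)=m_i\E(M_i)\le m_i\J(M_i)-3m_i\|M_i\|_2^2$. Summing gives $\sum_i\E(y_i)\le c+3t-3\sum_i\|y_i\|_2^2=c+3t'$, because $\tr R=\sum_i\|y_i\|^2+t'$. The term $m_i\|M_i\|^2\log\|M_i\|^2\le 0$ is simply dropped.

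The main obstacle I anticipate is the rigorous Jensen step for an extended-valued convex l.s.c.\ functional on an infinite-dimensional space. I would handle it by writing $\J$ as a supremum of continuous affine minorants, which exists since $\J$ is proper, convex and l.s.c.\ on a Hilbert space, and integrating each minorant. Measurability of the pushed-forward measure and the verification that $\mu'$ is concentrated on $\U$ are routine by comparison.
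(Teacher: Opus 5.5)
Your proposal is correct and follows the paper's proof essentially step for step: the cells of Lemma~\ref{lem:cells}, the pure atoms $y_i=\sqrt{m_i}M_i$, the covariance identity, the $\|A-B\|_2^2$-weighted pushforward of the pair measure, Jensen for $\J$, and the same trace and cost bookkeeping. The deviations are minor. You justify Jensen via continuous affine minorants rather than citing Proposition A-2 of Shirokov, which is a valid self-contained substitute. Your worry that $a<1$ might occur is unnecessary: inside a cell $\|A-B\|_2^2\le\J(A-B)/2<\eta/2<1$, so \eqref{eq:scaling} applies directly, as in the paper.
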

\begin{proof}
If $t=0$, $\mu$ is the zero measure and all assertions follow.
Assume $t>0$ and use the partition from Lemma~\ref{lem:cells}.
Discard the cells of zero $\mu$-measure; their countable union is
null, so no integral changes. All sums below are over the remaining
cells.

\medskip\noindent\textit{Cell means and their costs.}
For each cell define
\begin{equation}\label{eq:cell-data}
 \begin{gathered}
 m_i=\mu(C_i)>0,\qquad
 M_i=\frac1{m_i}\int_{C_i}A\,\mu(dA),\qquad y_i=\sqrt{m_i}M_i,\\
 s_i=\|M_i\|_2^2,\qquad w_i=\|y_i\|_2^2=m_i s_i,\qquad
 c_i=\int_{C_i}e(A)\,\mu(dA).
 \end{gathered}
\end{equation}
The amplitude integral exists by the integrability discussion in
Section~\ref{sec:selection}. The triangle inequality gives
$\|M_i\|_2\le m_i^{-1}\int_{C_i}\|A\|_2\,\mu(dA)=1$.
Thus $0\le s_i\le1$ and $0\le w_i\le m_i$.
Also $c_i<+\infty$, since $e\ge0$ and $c<+\infty$.

By Proposition A-2 in \cite[the Appendix]{shirokov2010} we can apply Jensen's inequality to the  convex, lower
semicontinuous and nonnegative function $\J$. It implies 
\begin{equation}\label{eq:jensen}
 \J(M_i)\le\frac1{m_i}\int_{C_i}\J(A)\,\mu(dA)
          =3+\frac{c_i}{m_i}<+\infty.
\end{equation}

By homogeneity and \eqref{eq:entropyidentity} we have
\begin{align}
 \E(y_i)&=m_i\E(M_i)
       =m_i\bigl[\J(M_i)-3s_i+s_i\log s_i\bigr]\notag\\
 &\le c_i+3m_i(1-s_i)+m_i s_i\log s_i\notag\\
 &\le c_i+3(m_i-w_i).
 \label{eq:meancost}
\end{align}
For $s_i=0$ the identity uses $M_i=y_i=0$ and $0\log0=0$.
For $s_i>0$ every term is finite by \eqref{eq:jensen}, and the
last inequality uses $s_i\le1$. In particular, each nonzero
extracted atom has finite normalized entanglement. We will now
identify $m_i-w_i$ as its cell's residual trace.

\medskip\noindent\textit{The covariance identity in trace class.}
Put
\[
 R_i=\int_{C_i}\proj{A}\,\mu(dA),\qquad
 b_i=\int_{C_i}A\,\mu(dA)=m_iM_i.
\]
On $C_i\times C_i$, the maps $\proj{A}$, $\proj{B}$,
$\ket{A}\bra{B}$ and $\ket{B}\bra{A}$ are strongly measurable
trace-class-valued functions of a pair $(A,B)$. Their trace norms are equal to $1$, while
$\|\proj{A-B}\|_1=\|A-B\|_2^2\le4$.
Since the product measure $\,\mu|_{C_i}\otimes\mu|_{C_i}\,$ has finite mass $m_i^2$, all are Bochner
integrable, and the Bochner Fubini theorem applies.

The two diagonal terms integrate to $m_iR_i$ each. The cross term
has the integral
\[
 \int_{C_i\times C_i}\ket{A}\bra{B}\,d\mu(A)d\mu(B)
    =\ket{b_i}\bra{b_i}=m_i^2\proj{M_i}.
\]
One may verify this by taking matrix elements against arbitrary
$x,z\in \mathcal{H}_{AB}$: the scalar integrand
$\langle x,A\rangle\langle B,z\rangle$ factors into a function
of $A$ and a function of $B$, and the Fubini theorem gives
$\langle x,b_i\rangle\langle b_i,z\rangle$.
The other cross term has the same integral. Thus expanding
\[
 \proj{A-B}=\proj{A}+\proj{B}
                 -\ket{A}\bra{B}-\ket{B}\bra{A}
\]
gives
\begin{equation}\label{eq:covariance}
 Q_i:=R_i-\proj{y_i}
   =\frac1{2m_i}\int_{C_i\times C_i}\proj{A-B}\,d\mu(A)d\mu(B)
   \ge0.
\end{equation}
The factor $1/(2m_i)$ cancels the two diagonal integrals and leaves
$\,m_i\proj{M_i}=\proj{y_i}\,$ from the cross terms. Positivity follows
also from testing the right side on $x\in \mathcal{H}_{AB}$: its quadratic form
is the integral of $|\langle A-B,x\rangle|^2/(2m_i)$.
In particular,
\begin{equation}\label{eq:cell-trace}
 \tr Q_i=m_i-w_i,\qquad
 \proj{y_i}\le R_i\le R.
\end{equation}

Because $\sum_i m_i=t$, the three positive series
$\sum_iR_i$, $\sum_i\proj{y_i}$ and $\sum_iQ_i$ converge
absolutely in trace norm: their trace-norm sums are respectively
$t$, $\sum_iw_i\le t$ and $\sum_i(m_i-w_i)\le t$.
Countable additivity of the Bochner integral gives $\sum_i R_i=R$.
Consequently
\begin{equation}\label{eq:remainder}
 R'=\sum_i Q_i
    =\sum_i\frac1{2m_i}\int_{C_i\times C_i}\proj{A-B}\,d\mu(A)d\mu(B),
 \qquad R=\sum_i\proj{y_i}+R'.
\end{equation}
This proves the exact operator decomposition and its positivity
before any normalization of the differences.

\medskip\noindent\textit{The residual as a Borel ensemble.}
For a same-cell pair write $D(A,B)=A-B$ and
$d(A,B)=\|A-B\|_2^2$. By Lemmas~\ref{lem:convex}
and~\ref{lem:cells},
\begin{equation}\label{eq:small}
 0\le d(A,B)\le\tfrac12\J(D)<\eta/2<1,
 \quad 0\le\E(D)\le\J(D)<\eta,\quad \textrm{where}\;\; D=D(A,B).
\end{equation}
Define a finite Borel measure on $\U\times\U$ by
\begin{equation}\label{eq:pair-measure}
 \nu=\sum_i\frac1{2m_i}
       (\mu|_{C_i})\otimes(\mu|_{C_i}),
 \qquad \nu(\U\times\U)=\sum_i\frac{m_i^2}{2m_i}=t/2.
\end{equation}
The nonnegative countable sum is a measure by Tonelli's theorem,
and the mass computation proves its finiteness. It is concentrated
on the Borel set $V=\bigcup_i(C_i\times C_i)$. Define
\[
 V_*=V\cap\{d(A,B)>0\},\qquad
 T(A,B)=\frac{A-B}{\sqrt{d(A,B)}},\quad(A,B)\in V_*.
\]
The maps $D,d$ are continuous on the set $\,\HS(\mathcal{H}_B,\mathcal{H}_A)\times\HS(\mathcal{H}_B,\mathcal{H}_A)$,
the map $T$ is continuous where $d(A,B)>0$. Thus, the map $T$ is Borel
on the set $V_*$. Its values have norm one. Because $d(A,B)<1$, the factor
$d(A,B)^{-1/2}$ is at least one on $V_*$, and \eqref{eq:scaling} gives
\[
 \J(T(A,B))\le d(A,B)^{-1}\J(D(A,B))<+\infty,\quad (A,B)\in V_*.
\]
Therefore $T$ takes values in $\U$, including the finite-entropy
condition needed at the next step.

Weight the measure $\nu|_{V_*}$ by the nonnegative Borel function $d$ and
push it forward by $T$:
\begin{equation}\label{eq:pushforward}
 \mu'=\bigl(d\,\nu|_{V_*}\bigr)\circ T^{-1},\qquad
 \mu'(F)=\int_{V_*}\mathbf1_F(T(A,B))\,d(A,B)\,\nu(dA,dB)
\end{equation}
for Borel $F\subset\U$. The weighted measure is finite by
\eqref{eq:small} and \eqref{eq:pair-measure}. For disjoint Borel
sets $F_j$, the monotone convergence theorem applied to their indicators gives
$\mu'(\bigcup_jF_j)=\sum_j\mu'(F_j)$, so the formula does define
a finite Borel measure. The set $\{D=0\}$ requires no choice of
normalized vector: its weight $d$ is zero.

For every nonnegative Borel function $g$ on $\U$, the pushforward
integration formula reads
\begin{equation}\label{eq:pushforward-test}
 \int_{\U}g(X)\,\mu'(dX)
   =\int_{V_*}d(A,B)g(T(A,B))\,\nu(dA,dB).
\end{equation}
It follows first for indicator functions from \eqref{eq:pushforward},
then for simple functions by addition, and finally for all such $g$
by the monotone convergence theorem.

To verify the barycenter, the integral
$\widehat R=\int_{\U}\proj{X}\,\mu'(dX)$ exists in trace class
because its integrand has norm one and $\mu'$ is finite. Apply
\eqref{eq:pushforward-test} to
$g(X)=|\langle X,z\rangle|^2$, for each $z\in \mathcal{H}_{AB}$. Since
$d\proj{T(A,B)}=\proj{D(A,B)}$ on $V_*$, this gives
\[
 \langle z,\widehat R z\rangle
  =\int_{V_*}|\langle D(A,B),z\rangle|^2\,\nu(dA,dB)
  =\langle z,R'z\rangle.
\]
The last equality uses \eqref{eq:remainder} and the fact that zero
differences contribute zero. Equality of quadratic forms, followed
by the polarization identity, proves $\widehat R=R'$.

\medskip\noindent\textit{Residual trace, residual cost and extracted cost.}
Putting $g=1$ in \eqref{eq:pushforward-test} yields
\begin{equation}\label{eq:trace-calculation}
 t'=\tr R'=\mu'(\U)=\int_{V_*}\,d(A,B)\nu(dA,dB)\le\frac{\eta}{2}\,\nu(V)=\frac{\eta t}{4}.
\end{equation}
For nonzero differences, homogeneity gives
$d(A,B)\,e(T(A,B))=\E(D(A,B))$. Since $e=\J-3$ is nonnegative
and Borel on $\U$, we may put $g=e$ in the same formula to obtain
\begin{equation}
 c'=\int_{\U}e(X)\,\mu'(dX)
    =\int_{V_*}\E(D(A,B))\,\nu(dA,dB)\leq\eta\nu(V)=\frac{\eta t}{2}<+\infty
 \label{eq:cost-calculation}
\end{equation}
due to (\ref{eq:small}). Although each normalized difference has finite entanglement, these
entanglements need not have a common upper bound. It is their
weighted integral, as displayed here, that is uniformly small.

Finally, taking traces in \eqref{eq:remainder} gives
$t'=t-\sum_iw_i=\sum_i(m_i-w_i)$. Summing
\eqref{eq:meancost} over cells, with nonnegative terms, proves
\[
 \sum_i\E(y_i)\le\sum_i c_i+3\sum_i(m_i-w_i)=c+3t'.
\]
Together with \eqref{eq:trace-calculation} and
\eqref{eq:cost-calculation}, this proves all the assertions.
\end{proof}

\subsection{Iteration at the original barycenter}\label{sec:iteration}
\begin{proof}[Final step of the proof of Theorem~\ref{thm:discretization}]
Use the Borel section of Section~\ref{sec:selection} to lift the
original measure $\mu$ to a probability measure $\mu_0$ on $\U$. Its
barycenter is $R_0=\rho$ and its cost is $c_0=c_\mu$. Fix
$0<\eta<1$ and put $\alpha=\eta/4$.

\medskip\noindent\textit{Recursive measures and estimates.}
Suppose $\mu_n$ has already been defined as a finite Borel measure
on $\U$, with finite entanglement cost. Set
\[
 R_n=\int_{\U}\proj{A}\,\mu_n(dA),\qquad
 t_n=\mu_n(\U)=\tr R_n,\qquad
 c_n=\int_{\U}e(A)\,\mu_n(dA).
\]
Apply Lemma~\ref{lem:extraction} with this measure and the same
$\eta$. Denote its output atoms by $(y_{n,i})_i$ and its weighted
pushforward residual measure by $\mu_{n+1}$. Thus
\begin{align}
 R_n&=P_n+R_{n+1},\qquad P_n=\sum_i\proj{y_{n,i}}\ge0,
                  \quad R_{n+1}\ge0,\label{eq:recursive-operators}\\
 t_{n+1}&\le\alpha t_n,\qquad
 c_{n+1}\le\frac{\eta}{2}t_n,\label{eq:recursion}\\
 a_n&:=\sum_i\E(y_{n,i})\le c_n+3t_{n+1}.
                  \label{eq:stage-cost}
\end{align}
The lemma proves that $\mu_{n+1}$ is again a finite Borel measure
on $\U$ and that $c_{n+1}<+\infty$. These are precisely the
hypotheses for the next application; we do not replace the residual
by an arbitrary pure-state decomposition. If $t_n=0$, then
$\mu_n=0$ and $R_n=0$; take all subsequent measures, atoms and
costs to be zero. This also covers finite termination.

Induction gives, for all $n\ge0$,
\begin{equation}\label{eq:geometric}
 t_n\le\alpha^n,\qquad 0\le R_{n+1}\le R_n\le\rho,
 \qquad
 c_n\le\frac{\eta}{2}\alpha^{n-1}\quad(n\ge1).
\end{equation}
The first inequality starts from $t_0=1$ and uses
\eqref{eq:recursion}. The operator inequalities follow from
\eqref{eq:recursive-operators}, and the last bound follows from
$c_n\le(\eta/2)t_{n-1}$. Notice that the initial cost $c_0$ is
kept separately. Every subsequent cost is bounded by the preceding
residual mass.

\medskip\noindent\textit{An exact countable barycenter.}
Summing \eqref{eq:recursive-operators} for $n=0,\ldots,N-1$
gives the exact identity
\begin{equation}\label{eq:partial-barycenter}
 \rho=\sum_{n=0}^{N-1}P_n+R_N,
 \qquad
 \left\|\rho-\sum_{n=0}^{N-1}P_n\right\|_1
       =\|R_N\|_1=t_N\le\alpha^N.
\end{equation}
The equality of trace and trace norm uses $R_N\ge0$. Because $0<\alpha<1$,
the remainder tends to zero in trace norm. This is a geometric
estimate after $N$ extraction stages. A single stage can already
extract countably many atoms, so \eqref{eq:partial-barycenter} does
not bound the remainder after a prescribed finite number of ensemble
members.

For each stage $n$ the atom series defining $P_n$ is trace-norm
convergent. Its total trace is $t_n-t_{n+1}$, so
\[
 \sum_{n\ge0}\sum_i\|\proj{y_{n,i}}\|_1
  =\sum_{n\ge0}(t_n-t_{n+1})
  =t_0-\lim_{N\to+\infty}t_N=1.
\]
Thus the series over all pairs $(n,i)$ is absolutely summable in
the Banach space of trace-class operators. The set of pairs is
countable, and any enumeration of these pairs has the same sum.
By \eqref{eq:partial-barycenter}, that sum is $\rho$.

Omit zero atoms. For every remaining pair $(n,i)$ put
\[
 q_{n,i}=\|y_{n,i}\|_2^2>0,\qquad
 \psi_{n,i}=y_{n,i}/\sqrt{q_{n,i}}.
\]
The amplitudes $\psi_{n,i}$ represent unit vectors in $\mathcal{H}_{AB}$. The
preceding trace calculation proves
\begin{equation}\label{eq:final-ensemble}
 \sum_{n,i}q_{n,i}=1,\qquad
 \rho=\sum_{n,i}q_{n,i}\proj{\psi_{n,i}}
       \quad\text{in trace norm}.
\end{equation}
Repeated pure states may be retained as separate labels or
combined by adding their weights. The range remark after this proof
verifies the range condition for each atom. No requirement that an output vector
belong to the support of the original measure $\mu$ has entered the argument.

\medskip\noindent\textit{Summation of the entanglement costs.}
Homogeneity gives $q_{n,i}e(\psi_{n,i})=\E(y_{n,i})$.
All summands are nonnegative, so the sum can be grouped by stage.
Using \eqref{eq:stage-cost} and then \eqref{eq:recursion} yields
\begin{align}
 \sum_{n,i}q_{n,i}e(\psi_{n,i})
 &=\sum_{n\ge0}a_n\notag\\
 &\le c_0+\sum_{n\ge1}c_n+3\sum_{n\ge0}t_{n+1}\notag\\
 &\le c_\mu+\frac{\eta}{2}\sum_{n\ge0}t_n
                +3\alpha\sum_{n\ge0}t_n\notag\\
 &\le c_\mu+\frac{\eta/2+3\eta/4}{1-\eta/4}
       =c_\mu+\frac{5\eta}{4-\eta}.
 \label{eq:total-cost}
\end{align}
The geometric bound on $t_n$ justifies finiteness of both series.
Equations~\eqref{eq:final-ensemble} and~\eqref{eq:total-cost}
prove the quantitative discretization assertion.
\end{proof}

\begin{unnumberedremark}[Location of the extracted atoms]
The operator inequality in \eqref{eq:cell-trace} also verifies the
range condition for an atom in an exact decomposition. Suppose the
incoming $R$ satisfies $R\le\rho$. Then
$\proj{y_i}\le R_n\le R\le\rho$ for a given vector $y_i$ extracted at the $n$-th iteration.
Write the spectral decomposition on the support of $\rho$ as
$\rho=\sum_j\lambda_j\proj{u_j}$, where $\lambda_j>0$.
For $z\in\ker\rho$, domination gives
$|\langle y_i,z\rangle|^2\le\langle z,\rho z\rangle=0$.
Thus $y_i$ lies in the closed span of $(u_j)$.

Put $a_j=\langle u_j,y_i\rangle$ and, for a finite initial set,
$z_N=\sum_{j\le N}a_j\lambda_j^{-1}u_j$. The same domination gives
\[
 \left(\sum_{j\le N}\frac{|a_j|^2}{\lambda_j}\right)^2
   =|\langle y_i,z_N\rangle|^2
   \le\langle z_N,\rho z_N\rangle
   =\sum_{j\le N}\frac{|a_j|^2}{\lambda_j}.
\]
Each nonnegative partial sum is therefore at most one. The series
$x=\sum_j a_j\lambda_j^{-1/2}u_j$ converges in $\mathcal{H}_{AB}$ with
$\|x\|\le1$ and satisfies $\rho^{1/2}x=y_i$. Hence
\begin{equation}\label{eq:range}
 y_i\in\operatorname{Ran}\rho^{1/2},\qquad
 \|\rho^{-1/2}y_i\|\le1.
\end{equation}
Here the inverse is taken on the support of $\rho$. This is
membership in the actual range, not just its closure (equal to the support of $\rho$). In normalized
notation $y_i=\sqrt{q_i}\psi_i$, it says
$q_i\|\rho^{-1/2}\psi_i\|^2\le1$. The proof also applies to
finite-rank $\rho$ by using its finite spectral sum. The iteration
has $R_n\le\rho$ at every stage, so \eqref{eq:range} holds
for every output atom.
\end{unnumberedremark}

The only entropy integrability assumption in this construction is
$\int e\,d\mu<+\infty$. In particular, no step uses finiteness of
$S(\rho_A)$, $S(\rho_B)$, or $S(\rho)$ (cf. \cite{shirokov2010}).

The use of $\J$ is specific to the entanglement cost. Its convexity
controls the means, and its small-difference bound controls the
weighted residual entanglement. A general lower semicontinuous cost function
need not have these properties: \cite[Remark~6]{shirokov2010}
gives distinct discrete and continuous convex roofs for some  bounded
lower semicontinuous unitarily unvariant concave cost function.

\begin{corollary}\label{cor:consequences}
The discrete Entanglement of Formation (the function defined in (\ref{eq:discrete-eof})) is lower semicontinuous on
the full state space. It vanishes on every separable state. For a
countably nondecomposable separable state the  infimum in (\ref{eq:discrete-eof}) is not attained.\footnote{A separable
state is countably nondecomposable if it has no finite or countable
decomposition into pure product states \cite{hsw2005}.}
\end{corollary}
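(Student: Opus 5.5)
The corollary has three assertions, and I will derive each from Corollary~\ref{cor:coincidence} together with the known properties of $\Ec$.

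\emph{Lower semicontinuity.} By Corollary~\ref{cor:coincidence}, $\Ed=\Ec$ on the whole state space, with infinite values allowed. Lower semicontinuity of $\Ec$ in the trace norm topology is proved in \cite{shirokov2010}, as recalled before Theorem~\ref{thm:discretization}. Hence $\Ed$ is lower semicontinuous.

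\emph{Vanishing on separable states.} Let $\rho$ be separable, i.e.\ in the trace-norm closure of the convex hull of pure product states. I would argue in either of two ways.
\begin{itemize}
\item By the Holevo--Shirokov--Werner characterization \cite{hsw2005}, $\rho$ is the barycenter of a Borel probability measure $\mu$ supported on pure product states. On that support $e=0$, so $c_\mu=0$.
\item Alternatively, choose finite convex combinations $\rho_k$ of pure product states with $\rho_k\to\rho$. Each has $\Ed(\rho_k)=0$, so lower semicontinuity gives $\Ed(\rho)\le\liminf_k\Ed(\rho_k)=0$.
\end{itemize}
The second route uses only the first part of this corollary and avoids the measure representation. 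In the first route, Theorem~\ref{thm:discretization} yields countable ensembles with cost at most $5\eta/(4-\eta)$ for every $\eta$, hence $\Ed(\rho)=0$.

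\emph{Non-attainment.} Suppose the infimum in \eqref{eq:discrete-eof} were attained at a countably nondecomposable separable $\rho$. The value is $0$, so there would be a countable ensemble $\{q_i,\psi_i\}$ with $\sum_i q_i e(\psi_i)=0$ and all $q_i>0$. Then $e(\psi_i)=S(\tr_B\proj{\psi_i})=0$ for every $i$. So each reduced state is pure, and each $\psi_i$ is a product vector $\phi_i\otimes\varphi_i$. This gives a countable decomposition of $\rho$ into pure product states, contradicting countable nondecomposability.

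The only delicate point is the vanishing on separable states. One must be sure that ``separable'' in the infinite-dimensional setting means the closure of finite mixtures, so that the lower semicontinuity route applies. Otherwise, one invokes \cite{hsw2005} for the measure representation. Both routes are routine given earlier results.
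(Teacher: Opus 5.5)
Your proposal is correct and follows the paper's proof. Lower semicontinuity is inherited from $\Ec$ via Corollary~\ref{cor:coincidence}; vanishing on separable states comes from the pure-product measure representation of \cite[Lemma~1]{hsw2005} (your first route); and non-attainment holds because a zero-cost countable ensemble consists only of pure product states. Your alternative lower-semicontinuity route for the vanishing is also valid, since separable states are by definition the trace-norm closure of the convex hull of (pure) product states, and it bypasses the measure representation.
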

\begin{proof}
The first assertion follows from equality with the continuous  EoF (the function defined in (\ref{eq:continuous-eof})) \cite[Section~5.1]{shirokov2010}. Every separable state
has a Borel measure representation by pure product states
\cite[Lemma~1]{hsw2005}, and therefore has zero continuous  EoF.
An ensemble attaining zero discrete EoF would consist entirely of
pure product states, since all summands are nonnegative. This is
excluded by countable nondecomposability.
\end{proof}

\paragraph{Use of AI tools.}
In preparing this work, Haobo Ma and Wenlin Zhang used an AI
orchestration built on OpenAI's ChatGPT (including GPT Pro) and Codex
and Anthropic's Claude. The AI agents assisted with literature search,
symbolic and numerical computations, drafting and checking proof details,
and Lean formalization attempts. Separate ChatGPT and Codex runs were
used to cross-check the arguments, and Claude coordinated the work.
All mathematical content was checked by the authors, who take full
responsibility for it.

\end{document}